\begin{document}
\title{A new phenomenological definition of entropy and application to black holes}
\author{Taha A Malik\inst{1} \email{taha.malik@utsa.edu, taha.malik11@alumni.imperial.ac.uk}\and  Rafael Lopez-Mobilia\inst{1}
\email{rafael.lopezmobilia@utsa.edu}
}                     
%
%
\institute{Department of Physics and Astronomy, The University of Texas at San Antonio 
}
\date{Received: date / Revised version: date}
%
\abstract{
Typically, the entropy of an isolated system in equilibrium is calculated by counting the number of accessible microstates, or in more general cases by using the Gibbs formula. In irreversible processes entropy spontaneously increases and this is understood from statistical arguments. We propose a new measure of entropy based on the level of irreversibility of a process. This formulation agrees in first approximation with the usual methods of calculating entropy and can be readily applied in the case of a black hole in the semiclassical regime.
\PACS{
      {PACS-key}{discribing text of that key}   \and
      {PACS-key}{discribing text of that key}
     } 
} 
\authorrunning{T. A. Malik and R.Lopez-Mobilia }
\titlerunning{Time relative entropy}
\maketitle
\section{Introduction}

Classically, the laws of black hole mechanics closely resemble the laws of thermodynamics \cite{10.2307/79210,10.1007/BF01645742}. In particular, the surface area of a black hole is always non-decreasing, analogous to the 2nd law of thermodynamics which states that entropy is always non decreasing. When quantum effects are included, the laws of black hole mechanics are interpreted as being true thermodynamics properties \cite{PhysRevLett.26.1344}. A black hole is found to have entropy $S=\frac{A}{4}$ in Planck units where $A$ is the surface area of the black hole \cite{PhysRevD.7.2333}. After this identification, there is mounting evidence that the generalized second law of thermodynamics holds at least semi-classically \cite{PhysRevD.85.104049,PhysRevD.82.124019}.

However, there is no widely accepted interpretation of what or even where the degrees of freedom of a black hole's entropy are. Perhaps a quantum theory of gravity will be required to resolve this question and a deeper understanding of the notion of entropy itself. Motivated by this, we take the second law of thermodynamics to be the most fundamental property of entropy and use it as a guide to construct a new measure of entropy, the time relative entropy.   

The time relative entropy is a measure of the irreversibility of a process relative to another process and, to first approximation, we show that the time relative entropy agrees with the usual methods of calculating entropy and can be readily applied to the case of a black hole in the semiclassical regime. We hope that the time relative entropy may lead to a better understanding of some of the above issues. Perhaps more fundamentally, entropy has more do to with irreversibility and the arrow of time than the microstates themselves.

\section{Semi-classical dynamics}

We describe a dynamical system with a state $p$ being a vector describing  a probability distribution over a set $A$ where $A$ is interpreted as a set of accessible cells of the system. 
For a fixed time step $\delta t$, we describe the dynamics of the state with a probability transition matrix (PTM), $D(\delta t)$ . In this way we model the dynamics of a system using a Markov chain so $p(k\delta t)=D^k(\delta t)p(0)$. The above can be viewed as a discretized version of dynamics on some phase space.  

For example, in this paper, we will be considering a semi-classical system. To obtain our semi-classical system using the model above, we start with a classical system with $N$ particles defined by a time independent Hamiltonian on some phase space at fixed energy $E$. We set $A(E)$ to be the accessible phase space of the system at fixed energy $E$ partitioned by cells of size $(\hbar/2)^{3N}$. Semi-classically, these cells represent with maximum certainty the momenta and positions of the $N$ particles. We will assume that $A$ is finite. Since we can not precisely calculate trajectories of these $N$ particles due to quantum fluctuations, we use a probability distribution $p$ over $A$ rather than a point in phase space and describe the state with dynamics described with a PTM. We describe how to construct the PTM in section \ref{003.SD}.
\subsection{Time relative entropy}
With the above system, define $K$ to be a subsystem if $K \subset A$ and $K \neq \emptyset$

Let $K_x$ and $K_y$ be two subsystems such that $K_x \cap K_y = \emptyset$. we define the time relative entropy of $K_x$ with respect to $K_y$ as 
\begin{equation}\label{003.RE}
S(K_x|K_y) = \text{log}\left(\frac{\tau(K_x \rightarrow K_y)}{\tau(K_y \rightarrow K_x)}\right),
\end{equation}
where $\tau(K_x \rightarrow K_y)$ is the expected time for the system initially in a cell in $K_x$ to evolve into a cell in $K_y$, averaged over all the cells in $K_x$. 

Note that if $S(K_x|K_y)>0 $, then it takes longer for the system to go from a cell $K_x$ to $K_y$ than it does from a cell in $K_y$ to $K_x$ on average. In this case we say that the process of evolving from $K_x$ to $K_y$ is more irreversible than the process of evolving from $K_y$ to $K_x$.


\section{Ideal gas heuristic example}

Consider the microcanonical ensemble of an ideal gas in a box with volume $V$ and $N$ particles at fixed energy $E$. Let $A$ be the accessible phase space partitioned into cells of size $(\hbar/2)^{3N}$. From the uncertainty principle, these cells represents the possible points the system could be in phase space with maximum precision semi-classically.  
\begin{itemize}
	\item Let $K_y$ be the subsystem corresponding to the collection of cells where the $N$ particles are all on one side of the box, contained within volume $V'<V$. 
	\item Let $K_x=A \backslash K_y$, which is approximately the whole phase space provided $|K_x|>>|K_y|$. This will be the case for `generic' choices of $V'$. 
\end{itemize}
Since below we give only a heuristic calculation for a time relative entropy as an example, `generic' and be though of as a reasonable choice.

As a rough estimate for the ratio of the expected times, we discretize time by the characteristic time step $\Delta t$ defined as the minimum time required for a particle to move from one side of the box to the other at its average velocity, so
\begin{equation}
\Delta t \propto \frac{V^\frac{1}{3}}{<velocity>}.
\end{equation}
After this time, we expect the system to be approximately in any cell in $A$ with equal probability since the particles have had enough time to transverse anywhere within the box. We make the following estimations:
\begin{equation}\label{003.T1}
\tau(K_y \rightarrow K_x) \approx  \Delta t
\end{equation}
since $|K_x|$ is much larger than $|K_y|$ and so similar to the diffusion of a gas, we expect this process to be quick and take only one time step. Another way to see this is to view $\tau(K_y \rightarrow K_x)$ as the expected time it takes at least one particle to leave the volume $V'$.
\begin{equation}\label{003.T2}
\tau(K_x \rightarrow K_y) \approx \Delta t\frac{|K_x|}{|K_y|}=\Delta t \left(\frac{V}{V'}\right)^N
\end{equation}
since after $|A|$ time steps, we expect the to system transverses every cell once. In this time, the system is in $K_x$ for $|K_x|$ time steps and $K_y$ for $|K_y|$ time steps. Hence 
\begin{equation}\label{003.TRE}
S(K_x|K_y) = log\left(\left(\frac{V}{V'}\right)^N\right)
\end{equation}
It agrees with the usual formula for the differences of entropies of two microcanonical ensembles with different volumes\footnote{We set Boltzmann constant equal to 1 ($k=1$)}, so Eq. (\ref{003.TRE}) can be interpreted a relative entropy. Explicitly
\begin{equation}
S(K_x|K_y) \propto (S(E,V,N)-S(E,V',N)),
\end{equation}
where $S(E,V,N)$ is the entropy of an ideal gas in the microcanonical ensemble. At least at this heuristic level, the time relative entropy agrees with the usual formula for entropy. This correspondence extends to the canonical ensemble since the canonical ensemble can be obtained by placing the system in a large heat bath, with the heat bath modeled using the microcanonical ensemble. 

\begin{remark}
One may wonder why the logic that applied Eq. to (\ref{003.T2}) can not be applied to Eq. (\ref{003.T1}). Notice that if we applied the same logic, then we would have a time step which is smaller than the discretized time step. Below we outline how to handle more general time steps.
\end{remark}

\section{Dynamics}\label{003.SD}

Without loss of generality, given a semi-classical system as described above, let the cells in $A$ be labelled so that  $A=\{c_1,...,c_\Omega\}$ where $\Omega=|A|$.
Then $p=p(t_0)$ becomes a $\Omega$-dimensional vector such that each entry, $i$, is the probability that the system is in cell $C=c_i$ at some time $t_0$ ($p(t_0)_i=P(C=c_i)$). After an arbitrary small choice for the time step, $\delta t$, we define the probability transition matrix (PTM) $D=D(\delta t)$ as 
\begin{equation}
D(\delta t)_{ij} = P(C=c_i,t_0+\delta t|C=c_j, t_0)
\end{equation} 
and after applying the law of total probabilities we have
\begin{equation}
D(\delta t)p(t_0)=p(t_0+\delta t),
\end{equation}
\begin{equation}
D(\delta t)^kp(t_0)=p(t_0+k\delta t),
\end{equation}
where we have assumed that the state $p(t_0+\delta t)$ only depends on the state $p(t_0)$ and so the dynamics can be modeled using a Markov chain\footnote{If $H$ does not explicitly depend on time, then $D$ also does not explicitly depend on time. And since the flow after time $\delta t$ only depends on the initial conditions $z_0$ and not on the entire history of the flow, we find that the dynamics can be modeled as a Markov chain.}.

Using the classical system from which we arrive at our semi-classical system, we can define $P(C=c_i,t_0+\delta t|C=c_j, t_0)$ by calculating how much of the flow, generated by classical Hamiltonian, from cell $c_j$ at time $t_0$ enters the cell $c_i$ after time $\delta t$ in phase space (Fig. (\ref{003.fig:H})). 

\begin{figure}\centering
  \resizebox{0.40\textwidth}{!}{\includegraphics{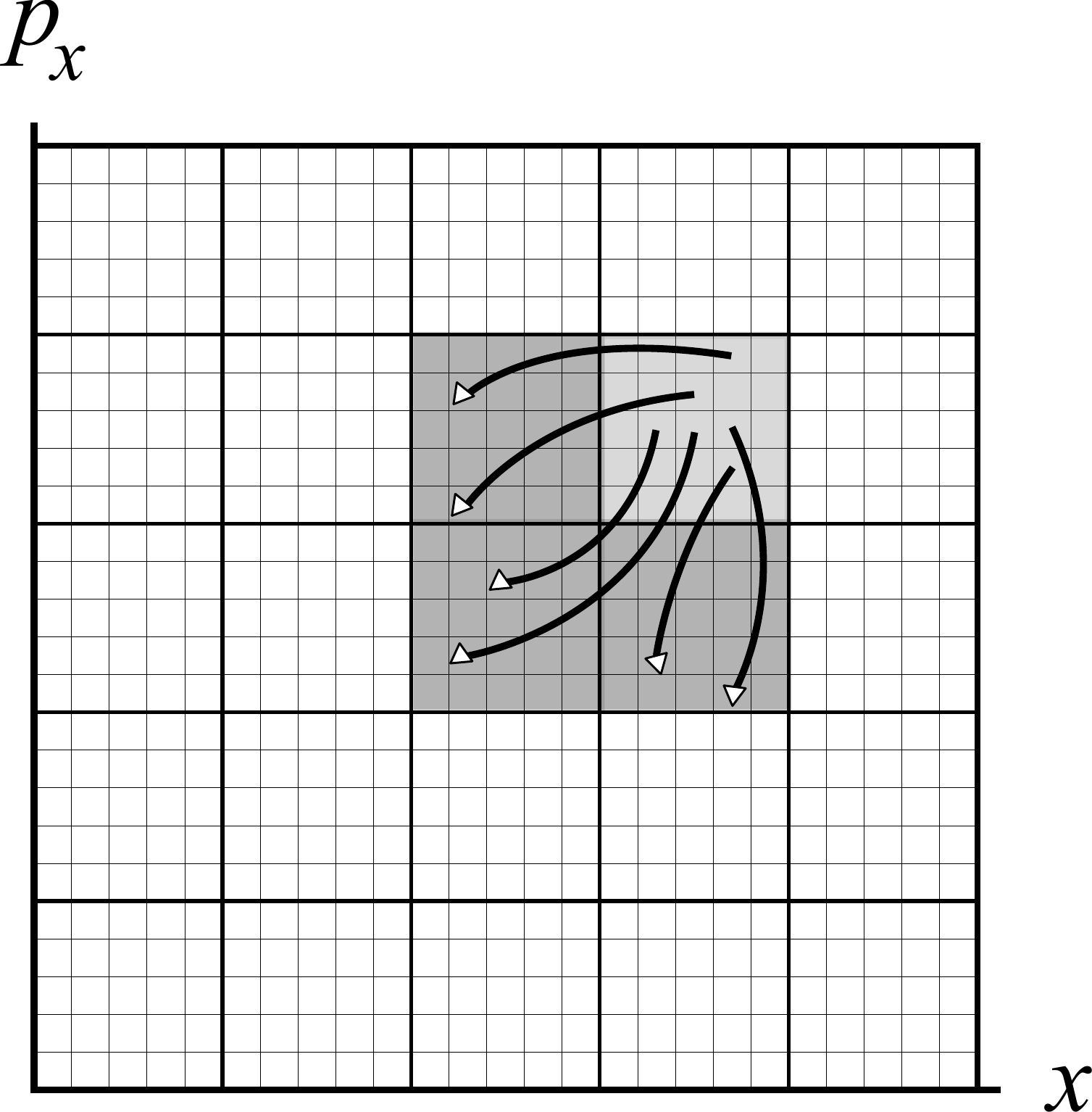}}
  \caption{Flow generated by the Hamiltionion on a simplified phase space}
  \label{003.fig:H}
\end{figure}

\begin{definition}
	Given a classical system defined by a Hamiltonian on a phase space and after fixing a time step $\delta t$, define $D(\delta t)$ via $P(C=c_i,t_0+\delta t|C=c_j, t_0)$ as
	\begin{align}
		D(\delta t)_{ij}=P(C=c_i,t_0+\delta & t|C=c_j, t_0)=\int_{c{_j}}\rho_i\circ\gamma_{\delta t}(z)dz \nonumber \\
		&=\frac{1}{(\hbar/2)^{3N}}\int_{c{_j}\cap\gamma_{\delta t}(c{_i})}dz,
	\end{align}
	where 
	$\rho_i(z)=\{\frac{1}{(\hbar/2)^{3N}} : z\in c_i , 0 : otherwise\}$, 
	$\gamma_t(z_0)$ is the flow generated in phase space by $H$ with initial condition $z_0=(x_0,p_0)$ at time $t=t_0$ and
	$dz=dx^ndp^n$ is the volume element in phase space $P$.
	
	\end{definition} 
This procedure allows us to take a classical system and define its dynamics semi-classically. 

\begin{theorem} 
$D(\delta t)$ converges  	
\end{theorem}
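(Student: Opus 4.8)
The plan is to read ``$D(\delta t)$ converges'' as the assertion that the iterated chain $D(\delta t)^{k}$ has a well-defined limit as $k \to \infty$, equivalently that for every initial state $p$ the distribution $D^{k}p$ relaxes to a unique equilibrium distribution. The natural tool is the Perron--Frobenius theorem for finite nonnegative matrices, so the whole argument reduces to verifying its hypotheses for the matrix $D = D(\delta t)$ produced by the Definition above.

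First I would record the elementary structural facts. The entries $D_{ij}$ are manifestly nonnegative, being integrals of a nonnegative density. Each column sums to $1$ because the flow maps the accessible shell onto itself (energy is conserved, $\gamma_{\delta t}(A)=A$), so $\sum_{i} D_{ij} = \frac{1}{(\hbar/2)^{3N}}\operatorname{vol}(c_{j})=1$; hence $D$ is column-stochastic and defines a genuine Markov chain. Liouville's theorem then supplies the row sums as well: the Hamiltonian flow preserves the volume element $dz$, so $\sum_{j} D_{ij} = \frac{1}{(\hbar/2)^{3N}}\operatorname{vol}(\gamma_{\delta t}(c_{i})) = \frac{1}{(\hbar/2)^{3N}}\operatorname{vol}(c_{i}) = 1$. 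Thus $D$ is in fact doubly stochastic, and the uniform vector $u=(1/\Omega,\dots,1/\Omega)^{\top}$ is automatically a stationary state, $Du=u$. This identifies the prospective limit with the maximum-entropy distribution, which is exactly what the surrounding physical picture demands.

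Next I would establish that $D$ is primitive, i.e. irreducible and aperiodic. Aperiodicity is the easy half: for $\delta t$ small enough, continuity of the flow (with $\gamma_{0}=\mathrm{id}$) together with compactness of $A$ forces $\gamma_{\delta t}(c_{i})\cap c_{i}$ to have positive volume for every $i$, so every diagonal entry $D_{ii}>0$ and the chain cannot be periodic. Irreducibility is the substantive input: it says that starting from any cell the system can reach any other cell in finitely many steps, which is precisely an ergodicity/mixing condition on the underlying classical dynamics at fixed energy. Granting primitivity, Perron--Frobenius yields a simple eigenvalue $1$ strictly dominating all others in modulus, so $D^{k}\to \mathbf{1}\,u^{\top}$ and $D^{k}p\to u$ for every probability vector $p$, which is the claimed convergence, with an explicit geometric rate set by the spectral gap.

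The main obstacle is the irreducibility step. For a general time-independent Hamiltonian the accessible shell need not be dynamically connected --- integrable systems with invariant tori are the standard counterexample --- so the statement cannot hold without some hypothesis guaranteeing that the coarse-grained dynamics mixes the cells of $A$ together. I would therefore either promote ergodicity (at the resolution set by the cell size) to an explicit assumption, or argue that at the semiclassical coarse-graining scale the quantum fluctuations invoked earlier smear trajectories enough to connect neighbouring cells, making $D$ irreducible even where the exact classical flow is not ergodic. A minor secondary point is to confirm that the small $\delta t$ used for aperiodicity is consistent with the $\delta t$ appearing elsewhere, but this is routine once a small-$\delta t$ regime is fixed.
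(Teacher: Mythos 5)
Your proposal is correct, but it is worth noting that the paper does not actually prove this theorem at all: its ``proof'' is a one-line citation to a standard Markov chain convergence result in the literature (Norris), with the needed hypotheses supplied elsewhere in the text. What you have done is reconstruct, in full, the argument that the citation hides, and your route matches the paper's surrounding structure in a satisfying way. Your verification of double stochasticity via Liouville's theorem (column sums from energy conservation of the flow, row sums from volume preservation) is more direct than the paper's own treatment, which reaches the same conclusion through a time-reversal symmetry proposition ($D_{ij}=D_{j'i'}$ with $P(x,p)=(x,-p)$) followed by a corollary on the eigenvector $\boldsymbol{1}$. Your insistence that irreducibility cannot come for free --- that it must be promoted to an explicit ergodicity assumption because integrable dynamics gives a genuine counterexample --- is precisely the assumption the paper makes immediately after its corollary (``We will assume that the system is ergodic so the eigenvalue 1 is not degenerate''), and your identification of the limit as $\boldsymbol{1}u^{\top}=\frac{1}{\Omega}M$ is exactly the paper's subsequent corollary, which is again only cited rather than proved. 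The one ingredient you supply that the paper nowhere addresses is aperiodicity (via positive diagonal entries for small $\delta t$); this is a real gap in the paper's logic that your argument closes, so your write-up is strictly more complete than the source.
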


\begin{proof}
	See known Markov chain result in the literature\cite{J.R.Norris:1998aa}. 
\end{proof}

\begin{remark} ~\

$\gamma_{-t}\circ\gamma_t(z_0)=z_0$

$\gamma_t=P\circ\gamma_{-t}\circ P$ such that $P(x,p)=(x,-p)$

\end{remark}

\begin{proposition}
	$D(\delta t)_{ij}=D(\delta t)_{j'i'}$, where $i'$ is the label for cell $P(c_i):=c_{i'}$.
\end{proposition}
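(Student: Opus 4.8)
The plan is to reduce the claimed symmetry to an equality of phase-space volumes and then obtain that equality from three ingredients already available: the time-reversal identity of the Remark, the measure-invariance of the parity map $P$, and Liouville's theorem for the Hamiltonian flow. First I would write out both sides of the asserted identity using the explicit formula of the Definition. Writing $h'=(\hbar/2)^{3N}$ and letting $|\cdot|$ denote phase-space (Liouville) volume, the matrix element $D(\delta t)_{ij}$ is $h'^{-1}$ times the volume of the set of points of the source cell $c_j$ that the flow carries into the target cell $c_i$, while
\begin{equation}
D(\delta t)_{j'i'}=\frac{1}{h'}\,\bigl|\,c_{i'}\cap\gamma_{\delta t}(c_{j'})\,\bigr|,
\end{equation}
with $c_{i'}=P(c_i)$ and $c_{j'}=P(c_j)$. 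The whole proposition then amounts to showing that these two volumes coincide.

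The key ingredients I would assemble are the following. From the Remark, $\gamma_t=P\circ\gamma_{-t}\circ P$, and since $P^2=\mathrm{id}$ this is equivalent to $\gamma_{-t}=P\circ\gamma_t\circ P$; this is the statement of time-reversal invariance of $H$ and is the one genuinely physical input. Next, $P(x,p)=(x,-p)$ is a linear involution whose Jacobian has absolute value $1$, so it preserves Liouville volume and, being a bijection, commutes with intersections: $P(S\cap T)=P(S)\cap P(T)$ and $|P(S)|=|S|$. Finally, Liouville's theorem guarantees $|\gamma_t(S)|=|S|$ for every $t$, and the group law $\gamma_{-t}\circ\gamma_t=\mathrm{id}$ together with bijectivity lets the flow be pushed through intersections in the same way.

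The execution is then a short chain of substitutions. In the expression for $D(\delta t)_{j'i'}$ I would replace $c_{i'},c_{j'}$ by $P(c_i),P(c_j)$, rewrite $\gamma_{\delta t}\circ P$ using the time-reversal identity as $P\circ\gamma_{-\delta t}$, factor $P$ out of the intersection, and discard it by measure-invariance; this turns $D(\delta t)_{j'i'}$ into $h'^{-1}\,|c_i\cap\gamma_{-\delta t}(c_j)|$. Separately, applying the measure-preserving map $\gamma_{-\delta t}$ to the set $c_j\cap\gamma_{\delta t}(c_i)$ that defines $D(\delta t)_{ij}$ and using $\gamma_{-\delta t}\circ\gamma_{\delta t}=\mathrm{id}$ sends it to $c_i\cap\gamma_{-\delta t}(c_j)$ without changing its volume, so the two volumes agree and $D(\delta t)_{ij}=D(\delta t)_{j'i'}$.

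I expect the only real obstacle to be bookkeeping rather than conceptual: keeping the source/target index order straight, tracking whether each step needs $\gamma_{\delta t}$ or $\gamma_{-\delta t}$, and invoking Liouville's theorem and the invariance of $P$ at exactly the right places. It is also worth stating explicitly the hidden hypothesis carried by the Remark's identity, namely that $H$ is invariant under momentum reversal (no magnetic or otherwise velocity-odd terms); without it the relation $\gamma_{-t}=P\circ\gamma_t\circ P$ fails and so does the proposition.
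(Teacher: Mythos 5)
Your proof is correct and essentially the same as the paper's: both reduce the claim to an equality of phase-space volumes using the same three ingredients (Liouville's theorem for $\gamma_t$, volume-invariance of the involution $P$, and the time-reversal identity $\gamma_{-t}=P\circ\gamma_{t}\circ P$ from the Remark), with your ``meet in the middle'' at $|c_i\cap\gamma_{-\delta t}(c_j)|$ being exactly the second intermediate expression in the paper's one-directional chain. Your explicit flagging of the hidden hypothesis --- that $H$ must be invariant under momentum reversal for the Remark's identity to hold --- is a worthwhile point the paper leaves implicit.
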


\begin{proof}
~\\

We show that $\int_{c{_j}}\rho_i\circ\gamma_{\delta t}(z)dz=\int_{c{_{i'}}}\rho_{j'}\circ\gamma_{\delta t}(z)dz$ \\

$\int_{c{_j}}\rho_i\circ\gamma_{\delta t}(z)dz=\frac{1}{(\hbar/2)^{3N}}\int_{c{_j}\cap\gamma_{\delta t}(c{_i})}dz$\\

$=\frac{1}{(\hbar/2)^{3N}}\int_{\gamma_{-\delta t}(c{_j})\cap c{_i}}dz$

$=\frac{1}{(\hbar/2)^{3N}}\int_{P\circ \gamma_{\delta t}(P(c{_j}))\cap c{_i}}dz$

$=\frac{1}{(\hbar/2)^{3N}}\int_{\gamma_{\delta t}(P(c{_j}))\cap P(c{_i})}dz=\int_{c{_{i'}}}\rho_{j'}\circ\gamma_{\delta t}(z)dz$.\\
where we use that $\gamma_t$ and $P$ are volume preserving maps in phase space. 

\end{proof}

\begin{corollary}\label{003.v}
	$\boldsymbol{1}=[1,1,...,1]^T$ is an eigenvector of $D=D(\delta t)$ with eigenvalue 1
\end{corollary}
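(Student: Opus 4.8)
The plan is to prove the right-eigenvector identity $D\boldsymbol{1}=\boldsymbol{1}$, which is equivalent to showing that every row of $D$ sums to $1$. First I would record the one ingredient that comes for free from the probabilistic meaning of $D$: since the entries $D(\delta t)_{ij}=P(C=c_i,t_0+\delta t\mid C=c_j,t_0)$ are transition probabilities, the law of total probability forces each column to sum to unity, $\sum_i D_{ij}=1$ for every $j$. (Equivalently, $\boldsymbol{1}$ is automatically a \emph{left} eigenvector, $\boldsymbol{1}^T D=\boldsymbol{1}^T$.) The real content of the corollary is therefore the stronger claim that $D$ is in fact doubly stochastic, and this is where the time-reversal Proposition must be used.

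The key step is to convert a row sum into a column sum via that symmetry. By the Proposition, $D_{ij}=D_{j'i'}$, where $i\mapsto i'$ is the relabeling induced by the momentum-reversal map $P(x,p)=(x,-p)$. Fixing a row index $i$ and summing over $j$ gives
\begin{equation}
\sum_{j=1}^{\Omega} D_{ij}=\sum_{j=1}^{\Omega} D_{j'i'}.
\end{equation}
Since $P$ is an involution mapping the partition $A$ to itself bijectively, the assignment $j\mapsto j'$ is a permutation of $\{1,\dots,\Omega\}$. Re-indexing the right-hand sum by $k=j'$ therefore turns it into a genuine column sum $\sum_{k=1}^{\Omega} D_{k\,i'}$, and the free ingredient above makes this equal to $1$. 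Hence $\sum_j D_{ij}=1$ for every $i$, which is exactly $D\boldsymbol{1}=\boldsymbol{1}$.

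The only delicate point is the legitimacy of the reindexing $k=j'$, and this rests on the assumption—already implicit in the Proposition, where $c_{i'}:=P(c_i)$ is taken to be a cell—that the partition is invariant under momentum reversal. For the uniform phase-space grid used here this holds because $P$ acts as $p\mapsto -p$ at fixed $x$ and is volume preserving, so it merely permutes the cubical cells without creating or destroying any; thus $j\mapsto j'$ is a true bijection of the index set. With that invariance made explicit the argument reduces to a single reindexing, so I expect no substantive obstacle beyond stating the symmetry assumption clearly.
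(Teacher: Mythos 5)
Your proof is correct and follows essentially the same route as the paper's: both use column-stochasticity of the PTM together with the time-reversal symmetry $D_{ij}=D_{j'i'}$ to convert row sums into column sums via the permutation $j\mapsto j'$ induced by $P$ (the paper runs the identity from column sum to row sum, you run it the other way, which is the same argument). Your explicit remark that the partition must be invariant under momentum reversal for $j\mapsto j'$ to be a genuine permutation is a point the paper leaves implicit, and is worth stating.
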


\begin{proof}
Since $D$ is a PTM, $\sum_{ij}D_{ij}p_j=1$ for all probability distributions $p_j$ (Sums of probabilities must equal 1 and both $p_j$ and $\sum_{ij}D_{ij}p_j$ are probability distributions). In particular, for $p_j=\delta_{jk}$

$1=\sum_{ij}D_{ij}p_j=\sum_iD_{ik}$ so the sum of any column of $D$ equals 1. Therefore, $1=\sum_iD_{ik}=\sum_iD_{k'i'}=\sum_{i'}D_{k'i'}=\sum_iD_{k'i}$ so sum of any row of $D$ equals 1.\\

Hence $\sum_jD_{ij}\boldsymbol{1}_j=\sum_jD_{ij}=1$ or $D\boldsymbol{1}=\boldsymbol{1}$. \\

Similarly, $\boldsymbol{1}^TD=\boldsymbol{1}^T$.

\end{proof}

We will assume that the system is ergodic so the eigenvalue 1 is not degenerate.

\begin{corollary}
	$Lim_{r\longrightarrow \infty}(D(\delta t)^r)=\frac{1}{\Omega}M$, where $M$ is a matrix such that all its entries are 1
\end{corollary}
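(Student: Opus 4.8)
The plan is to combine the double stochasticity of $D$ established in Corollary \ref{003.v} with the ergodicity assumption and the standard Perron--Frobenius convergence theorem for primitive stochastic matrices (the Markov chain result already cited). From Corollary \ref{003.v} we have both $D\boldsymbol{1}=\boldsymbol{1}$ and $\boldsymbol{1}^T D=\boldsymbol{1}^T$, so $D$ is doubly stochastic. In particular the uniform distribution $\pi=\frac{1}{\Omega}\boldsymbol{1}$ satisfies $D\pi=\pi$, so it is a stationary distribution of the chain, with $\boldsymbol{1}$ the right eigenvector and $\boldsymbol{1}^T$ the left eigenvector for the eigenvalue $1$.

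First I would pin down the spectral picture. By the ergodicity assumption the eigenvalue $1$ is simple. I would then argue that $D$ is in fact primitive (aperiodic), which is what guarantees that the \emph{powers} $D^r$ converge rather than merely their Cesàro averages. This can be seen directly from the construction of $D$: for a sufficiently small time step $\delta t$ the flow displaces points within a cell only slightly, so $c_i\cap\gamma_{\delta t}(c_i)$ has positive measure and hence $D_{ii}>0$; an irreducible stochastic matrix with a strictly positive diagonal entry is aperiodic. Consequently every eigenvalue $\lambda\neq 1$ satisfies $|\lambda|<1$.

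With primitivity in hand, the convergence theorem gives $\lim_{r\to\infty} D^r = P_1$, the spectral projection onto the one-dimensional eigenvalue-$1$ eigenspace. Because the right and left eigenvectors are $\boldsymbol{1}$ and $\boldsymbol{1}^T$, this projector is the rank-one matrix
\begin{equation}
P_1=\frac{\boldsymbol{1}\boldsymbol{1}^T}{\boldsymbol{1}^T\boldsymbol{1}}=\frac{1}{\Omega}\boldsymbol{1}\boldsymbol{1}^T=\frac{1}{\Omega}M,
\end{equation}
since $\boldsymbol{1}\boldsymbol{1}^T$ is exactly the all-ones matrix $M$ and $\boldsymbol{1}^T\boldsymbol{1}=\Omega$. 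Equivalently, the convergence theorem states that each column of $D^r$ tends to the stationary distribution $\pi=\frac{1}{\Omega}\boldsymbol{1}$, so every entry of $D^r$ tends to $1/\Omega$, which is the same conclusion.

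The main obstacle is the aperiodicity/primitivity issue: double stochasticity together with non-degeneracy of the eigenvalue $1$ does not by itself forbid other eigenvalues on the unit circle (for instance a period-two chain has $-1$ in its spectrum and $D^r$ then fails to converge). The crux of a fully rigorous argument is therefore to justify that no such eigenvalues occur, which I would do via the positive-diagonal observation above, or alternatively by folding aperiodicity into the meaning of \emph{ergodic} as the paper appears to intend. Once that is secured, the identification of the limit with $\frac{1}{\Omega}M$ is immediate from the doubly stochastic structure.
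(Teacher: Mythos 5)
Your proposal is correct, and at bottom it rests on the same result the paper invokes: the paper's ``proof'' is a one-line citation to the standard Markov chain convergence theorem, and your argument is essentially that theorem unpacked. What you add beyond the paper is the supporting structure the citation silently requires: you derive the uniform stationary distribution from double stochasticity (Corollary \ref{003.v}), and --- more importantly --- you notice that the paper's ergodicity assumption, as stated (``the eigenvalue $1$ is not degenerate''), only delivers irreducibility, which is \emph{not} sufficient for convergence of the powers $D^r$ (a period-two chain is the standard counterexample, as you note). Your repair, that $c_i\cap\gamma_{\delta t}(c_i)$ has positive measure for small $\delta t$ so that $D_{ii}>0$ and the chain is aperiodic, is physically reasonable and consistent with the paper's own setup, where $\delta t$ is taken to be an arbitrarily small time step; alternatively, as you say, one can simply read ``ergodic'' in the strong sense (irreducible and aperiodic). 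So the two arguments are the same in substance, but yours makes explicit a hypothesis the paper leaves implicit, which is a genuine improvement in rigor over the paper's treatment.
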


\begin{proof}
See known Markov chain result in the literature\cite{J.R.Norris:1998aa}. 
\end{proof}
 
\section{Entropy of microcanonical ensemble}

Suppose we have a Hamiltonian system with $N$ particles with fixed total energy $E$, which we wish to model semi-classically. Let $K_1 \subset A$ be the subsystem of one (arbitrary) cell and let $K_{all} \subset A=A \backslash K_1$ be the subsystem of all the other cells. Without loss of generality label the arbitrary cell as the last label. Then 
\begin{equation}
S(K_{all}|K_1) = \text{log}\left(\frac{\tau(K_{all} \rightarrow K_1)}{\tau(K_1 \rightarrow K_{all})}\right),
\end{equation}
where

$\tau(K_{all} \rightarrow K_1)=$

$\delta t[v_1^TD(\delta t)p_{all}+2(v_{all}^TD(\delta t)p_{all})(v_1^TD(\delta t)^2p_{all})$

$+3(v_{all}^TD(\delta t)p_{all})(v_{all}^TD(\delta t)^2p_{all})(v_1^TD(\delta t)^3p_{all})$

$+...] $

$$
=\delta t \sum_{n=1}^{\infty}n\Pi_{k=0}^{n-1}\left(v_{all}^TD(\delta t)^kp_{all}\right)(v_1^TD(\delta t)^np_{all})
$$
with

$p_{all}^T=\frac{1}{\Omega-1}[1,1,1,...,0]$, $p_1^T=[0,0,0,...,1]$, 

$v_{all}^T=[1,1,1,...,0]$, $v_1^T=[0,0,0,...,1]$.

A similar expression can be given for $\tau(K_{1} \rightarrow K_{all})$ by swapping the (all) and (1) index\footnote{In words, 

$\tau(K_{all} \rightarrow K_1)=$

$\delta t \sum_{n=1}^{\infty}n\Pi_{k=0}^{n-1}((\text{Prob of system in }K_{all}\text{ after k steps})$

$\times(\text{Prob of system in }K_{1}\text{ after n steps}))$}
. Hence
\begin{align}
&S(K_{all}|K_1)=\nonumber \\
&\text{log}\left(
\frac{\delta t \sum_{n=1}^{\infty}n\Pi_{k=0}^{n-1}\left(v_{all}^TD(\delta t)^kp_{all}\right)(v_1^TD(\delta t)^np_{all})}{\delta t \sum_{n=1}^{\infty}n\Pi_{k=0}^{n-1}\left(v_{1}^TD(\delta t)^kp_{1}\right)(v_{all}^TD(\delta t)^np_{1})}\right).
\end{align}
\begin{remark}
 $(v_i^TD(\delta t)^rp_j)$ is the probability of being in a cell in $K_i$ after $r$ steps with the state initially uniformly distributed on $K_j$. 
\end{remark}

\section{Calculation of $S(K_{all}|K_1)$} 

Note that any term of the form $v^T_iD^kp_j$ can be rewritten in terms of $v^T_1D^kp_1$ using corollary (\ref{003.v}) so the entropy above can be written as \footnote{Example: $v^T_{all}D^kp_1=(\boldsymbol{1}^T-v_1^T)D^kp_1=1-v_1^TD^kp_1$}
\begin{align}\label{003.RE2}
& S(K_{all}|K_1)= \nonumber \\
&\text{log}\left(
\frac{\sum_{n=1}^{\infty}n\Pi_{k=0}^{n-1}\left[1-\frac{1}{\Omega -1}(1-f(k))\right](1-f(n))\frac{1}{\Omega-1}}{\sum_{n=1}^{\infty}n\Pi_{k=0}^{n-1}\left[f(k)\right](1-f(n))}\right),
\end{align}
where $f(k)=v_1D^k(\delta t)p_1$, which is the probability that the system remains in cell $K_1$ after $k$ time steps.

\subsection{Test functions for $f$}

To understand Eq. (\ref{003.RE2}), we first calculate the expression using a test function for $f$. We choose $f$ to be
\begin{equation}
f(k)=(1-\frac{1}{\lambda}k) : k \leq k' \\
, f(k)=\frac{1}{\Omega} : k \geq k',
\end{equation}
where $k'$ is defined such that $1-\frac{1}{\lambda}k'=\frac{1}{\Omega}$ so $k' \approx \lambda$. Without loss of generality we will assume that $\lambda$ is an integer. If we assume that  $1<\lambda<<\Omega$ then the numerator in Eq. (\ref{003.RE2}) becomes 
\begin{subequations}

\begin{equation}
\sum_{n=1}^{\infty}n\Pi_{k=0}^{n-1}\left[1-\frac{1}{\Omega -1}(1-f(k))\right](1-f(n))\frac{1}{\Omega-1}
\end{equation}
\begin{align}
&\approx\frac{1}{\Omega}\sum_{n=1}^{k'}n\Pi_{k=0}^{n-1}\left[1-\frac{1}{\Omega}\left(\frac{k}{\lambda}\right)\right]\left(\frac{n}{\lambda}\right)\nonumber \\
&+\frac{1}{\Omega}\sum_{n=k'+1}^{\infty}n\Pi_{k=0}^{k'}\left[1-\frac{1}{\Omega}\left(\frac{k}{\lambda}\right)\right]\Pi_{k=k'}^{n-1}\left[1-\frac{1}{\Omega}\right]
\end{align}
\begin{align}\label{003.16}
&=\frac{1}{\Omega}\sum_{n=1}^{k'}\frac{n^2}{\lambda}\left(\frac{1}{\Omega \lambda}\right)^n \frac{\Gamma(\Omega \lambda+1)}{\Gamma(\Omega \lambda-n+1)} \nonumber \\
&+\frac{1}{\Omega}\sum_{n=k'+1}^{\infty}n\left(\frac{1}{\Omega \lambda}\right)^{k'} \frac{\Gamma(\Omega \lambda+1)}{\Gamma(\Omega \lambda-k'+1)}\left(1-\frac{1}{\Omega}\right)^{n-k'}.
\end{align}	
\end{subequations}
Similarly, the denominator is approximately given by 
\begin{equation}
\sum_{n=1}^{k'}\frac{n^2}{\lambda}\left(\frac{1}{\lambda}\right)^n \frac{\Gamma(\lambda+1)}{\Gamma(\lambda-n+1)} +\frac{1}{\Omega}\sum_{n=k'+1}^{\infty}n\left(\frac{1}{\lambda}\right)^{k'} \lambda ! \left({\frac{1}{\Omega}}\right)^{n-k'}.
\end{equation}
Our assumption implies that $\lambda << \Omega < \Omega \lambda << \Omega^2$ so that $\frac{\Gamma(\Omega \lambda+1)}{\Gamma(\Omega \lambda-z+1)} \approx (\Omega \lambda)^z$ for any $z$ between $1$ and $\lambda$ and Eq. (\ref{003.16}) becomes
\begin{equation}
\frac{1}{\Omega}\sum_{n=1}^{k'}\frac{n^2}{\lambda}+\frac{1}{\Omega}\sum_{n=k'+1}^{\infty}n\left(1-\frac{1}{\Omega}\right)^{n-k'}.
\end{equation}
Hence 
\begin{align}
&S(K_{all}|K_1)\approx \nonumber \\
&\frac{\frac{1}{\Omega}\sum_{n=1}^{k'}\frac{n^2}{\lambda}+\frac{1}{\Omega}\sum_{n=k'+1}^{\infty}n\left(1-\frac{1}{\Omega}\right)^{n-k'}}{\sum_{n=1}^{k'}\frac{n^2}{\lambda}\left(\frac{1}{\lambda}\right)^n \frac{\Gamma(\lambda+1)}{\Gamma(\lambda-n+1)} +\frac{1}{\Omega}\sum_{n=k'+1}^{\infty}n\left(\frac{1}{\lambda}\right)^{k'} \lambda ! \left({\frac{1}{\Omega}}\right)^{n-k'}} \label{003.19} \\
&\approx \frac{\frac{1}{\lambda}[\frac{1}{6}k'(k'+1)(2k'+1)]+(\frac{1}{\Omega}k'+1)\Omega^2}{\Omega h(k') + (1+k'-\frac{1}{\Omega}k')\frac{k'!}{{k'^{k'}}}} \label{003.20} \nonumber \\
& \approx \left(\frac{\frac{1}{\Omega}k'+1}{h(k')}\right)\Omega+\frac{1}{3}\frac{1}{h(k') \Omega}k'^2+O(k'),
\end{align}
where 
\begin{align}\label{003.21}
h(k')=\sum_{n=1}^{k'}\frac{n^2}{\lambda}&\left(\frac{1}{\lambda}\right)^n \frac{\Gamma(\lambda+1)}{\Gamma(\lambda-n+1)} \nonumber \\
& \approx\sum_{n=1}^{k'}\frac{n^2}{k'}\left(\frac{1}{k'}\right)^n \frac{k' !}{(k'-n)!}.
\end{align}
Note that in Eq. (\ref{003.19}), (\ref{003.20}) and (\ref{003.21}), we used that $\lambda \approx k'$

\begin{conjecture}

$\frac{n^2}{k'}\left(\frac{1}{k'}\right)^n\frac{k'!}{(k'-n)!}\leq 1$ for integer $n\in[0,k']$.

\end{conjecture}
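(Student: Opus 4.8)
The plan is to bound the left-hand side by a smooth envelope that is exact precisely where equality should hold, and then to collapse a two-parameter inequality to a one-parameter one. Write $m:=k'$ and set
\[
g(n):=\frac{n^2}{m}\left(\frac1m\right)^n\frac{m!}{(m-n)!}=\frac{n^2}{m}\prod_{j=1}^{n-1}\left(1-\frac{j}{m}\right),
\]
so the claim is $g(n)\le1$ for integers $0\le n\le m$ with $m\ge1$. The cases $n=0$ (where $g=0$) and $n=1$ (where $g=1/m\le1$) are immediate, so I assume $2\le n\le m$.

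First I would replace the falling factorial by a single power via the arithmetic--geometric mean inequality. The $n-1$ positive numbers $m-1,m-2,\dots,m-(n-1)$ have arithmetic mean $m-\tfrac{n}{2}$, so $\prod_{j=1}^{n-1}(m-j)\le(m-\tfrac{n}{2})^{n-1}$, which yields the envelope
\[
g(n)\le G(n,m):=\frac{n^2}{m}\left(1-\frac{n}{2m}\right)^{n-1}.
\]
The reason for using AM--GM rather than the cruder bound $1-x\le e^{-x}$ is that $G$ is tight at the extremal configuration $n=m=2$ (there the product is a single factor, so AM--GM is an equality), whereas the exponential bound already overshoots $1$ at that point, where $g$ equals $1$ exactly.

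Next, for fixed $n$ I would maximize $G$ over the continuous range $m\ge n$. A short computation gives $\partial_m\ln G=\tfrac1m\bigl(-1+\tfrac{n(n-1)}{2m-n}\bigr)$, which vanishes at $m=n^2/2$ and changes sign there from positive to negative; since $n^2/2\ge n$ for $n\ge2$, this interior critical point is the maximum on $[n,\infty)$. Substituting $m=n^2/2$ collapses the bound to the one-variable quantity $G\le 2\bigl(1-\tfrac1n\bigr)^{n-1}$. It then suffices to prove $\bigl(1-\tfrac1n\bigr)^{n-1}\le\tfrac12$ for integers $n\ge2$: writing $h(n):=\bigl(1-\tfrac1n\bigr)^{n-1}$ and differentiating $\ln h(n)=(n-1)\ln\tfrac{n-1}{n}$ gives $\tfrac{d}{dn}\ln h=\ln\!\bigl(1-\tfrac1n\bigr)+\tfrac1n\le0$ by the standard inequality $\ln(1-x)\le-x$. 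Hence $h$ is nonincreasing and $h(n)\le h(2)=\tfrac12$, so chaining the three displayed estimates gives $g(n)\le G\le 2\cdot\tfrac12=1$.

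The main obstacle is the choice made in the first step: the naive estimate $\prod(1-j/m)\le e^{-n(n-1)/2m}$ is too lossy near the boundary $n=m$ --- it exceeds $1$ already at $n=m=2$ --- so the argument depends on selecting an envelope that is exact at the equality case, which is what AM--GM provides. After that the optimization in $m$ and the monotonicity of $h$ are routine. A minor point to record is that optimizing the continuous relaxation in $m$ is legitimate: the actual $m$ is an admissible value $\ge n$, so the continuous maximum $2(1-1/n)^{n-1}$ is a valid upper bound for the integer instance as well, and equality traces back exactly to $n=m=2$.
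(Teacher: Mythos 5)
Your proof is correct, and the most important thing to know is that the paper itself offers \emph{no} proof of this statement: it is presented as a conjecture, the authors state explicitly that they could not prove it, and the only support given is numerical, with a rigorous proof deferred to future work. Your argument therefore settles an open point of the paper rather than paralleling anything in it. The steps all check out: the rewriting $\frac{n^2}{k'}\left(\frac{1}{k'}\right)^n\frac{k'!}{(k'-n)!}=\frac{n^2}{k'}\prod_{j=1}^{n-1}\left(1-\frac{j}{k'}\right)$; the AM--GM bound $\prod_{j=1}^{n-1}(k'-j)\le\left(k'-\frac{n}{2}\right)^{n-1}$, valid since all factors are positive when $n\le k'$; the maximization of $G(n,m)=\frac{n^2}{m}\left(1-\frac{n}{2m}\right)^{n-1}$ over real $m\ge n$, whose critical point $m=n^2/2$ indeed lies in the admissible range because $n^2/2\ge n$ for $n\ge 2$ (and passing to the continuous supremum is legitimate, as you note, since the actual integer $k'$ is an admissible value); and the monotonicity of $h(n)=\left(1-\frac{1}{n}\right)^{n-1}$, which pins its maximum at $h(2)=\frac{1}{2}$ and closes the chain $g\le G\le 2h(n)\le 1$. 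Your observation that equality holds at $n=k'=2$, where the left-hand side equals exactly $1$, shows the constant $1$ in the conjecture is sharp and correctly explains why the cruder estimate $1-x\le e^{-x}$ cannot work (it gives $2e^{-1/2}>1$ at that point), so the choice of AM--GM is not cosmetic but essential. As a byproduct, your proof makes the paper's subsequent corollary, $h(k')\in\left(\frac{1}{k'},k'\right)$, unconditional, whereas the paper could only motivate it by remarking that the summands appear numerically to be of order at most unity.
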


\begin{proof}
We have not found a proof for this conjecture. However, we have analyzed it numerically and it seems to hold. We leave the proof of this conjecture for future work.
\end{proof}

\begin{corollary}
$h(k')\in(\frac{1}{k'},k')$.
\end{corollary}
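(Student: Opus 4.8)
The plan is to read $h(k')$ as a finite sum of $k'$ nonnegative terms and bound it term by term, obtaining the lower bound by directly evaluating a single term and the upper bound by invoking the preceding conjecture. Write $t_n := \frac{n^2}{k'}\left(\frac{1}{k'}\right)^n \frac{k'!}{(k'-n)!}$, so that $h(k') = \sum_{n=1}^{k'} t_n$ with every $t_n \ge 0$. Throughout I would assume $k' \ge 2$, which is forced by the standing hypothesis $1 < \lambda \approx k'$; for $k' = 1$ the claimed open interval $(1/k', k')$ is empty and the statement is vacuous.

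For the lower bound I would simply evaluate the first summand. Setting $n=1$ gives $t_1 = \frac{1}{k'}\cdot\frac{1}{k'}\cdot\frac{k'!}{(k'-1)!} = \frac{1}{k'}$, since $\frac{k'!}{(k'-1)!} = k'$. Because the remaining summands $t_2,\dots,t_{k'}$ are each a product of strictly positive factors and hence strictly positive for $k' \ge 2$, we get
$$h(k') = \frac{1}{k'} + \sum_{n=2}^{k'} t_n > \frac{1}{k'},$$
which is the left endpoint. This half requires nothing beyond positivity of the terms.

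For the upper bound I would invoke the conjecture, which asserts exactly $t_n \le 1$ for every integer $n \in [0,k']$. Since the sum has $k'$ terms, the crude estimate already yields $h(k') \le k'$; to make the inequality strict I would keep the $n=1$ term separate, using $t_1 = 1/k' < 1$, and bound the remaining $k'-1$ terms by $1$:
$$h(k') = t_1 + \sum_{n=2}^{k'} t_n \le \frac{1}{k'} + (k'-1) = k' - 1 + \frac{1}{k'} < k',$$
the final inequality holding because $\frac{1}{k'} < 1$ for $k' \ge 2$. Combining the two bounds gives $h(k') \in (1/k', k')$.

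The main obstacle lies not in the corollary itself, whose two estimates are both immediate, but in its dependence on the preceding conjecture for the upper bound: since that per-term inequality is only verified numerically, the upper bound is conditional, whereas the lower bound is unconditional. If one wished to remove the dependence, the hard part would be establishing the sharp bound $t_n \le 1$ directly, most plausibly by analyzing the ratio $t_{n+1}/t_n$ to locate the maximal term and confirming that its value never exceeds $1$.
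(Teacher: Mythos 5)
Your proposal is correct and follows essentially the same route as the paper, which states the corollary without any separate proof as an immediate consequence of the preceding conjecture: the conjectured term-wise bound $t_n \le 1$ over the $k'$ summands gives the upper bound, and the explicit value $t_1 = 1/k'$ together with positivity of the remaining terms gives the lower bound. Your version merely makes this implicit argument explicit (including strictness and the $k'\ge 2$ caveat), and correctly flags that the upper bound inherits the conditional status of the numerically-verified conjecture, just as in the paper.
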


\begin{remark}
Even if the lemma is not true, we find that 

$\frac{n^2}{k'}\left(\frac{1}{k'}\right)^n\frac{k'!}{(k'-n)!}$ to be of order of at most unity and to be less than one for almost all values of $n$. Thus we still expect the corollary to hold.  
\end{remark}

Hence for $1<\lambda <<\Omega$, we obtain using the corollary, 

$$\Omega h(k') >>(1+k')/k' > (1+k'-\frac{1}{\Omega}k')/k'\times\frac{k'!}{k'^{k'-1}}$$ 
so the term $(1+k'-\frac{1}{\Omega}k')\frac{k'!}{k'^{k'}}$ can be ignored in Eq. (\ref{003.20}). Hence
\begin{align}\label{003.A}
&\frac{\frac{1}{\lambda}[\frac{1}{6}k'(k'+1)(2k'+1)]+(\frac{1}{\Omega}k'+1)\Omega^2}{\Omega h(k') + (1+k'-\frac{1}{\Omega}k')\frac{k'!}{{k'^{k'}}}}\nonumber \\
& \approx \left(\frac{\frac{1}{\Omega}k'+1}{h(k')}\right)\Omega+\frac{1}{3}\frac{1}{h(k') \Omega}(k'^2+\ldots),
\end{align}
where $\ldots$ contain lower order terms in $k'$

Now, $\frac{k'^2}{\Omega}<<\frac{\Omega^2}{\Omega}=\Omega$ and so 
\begin{equation}
\left(\frac{\frac{1}{\Omega}k'+1}{h(k')}\right)\Omega  \approx \frac{\Omega}{h(k')} >> \frac{1}{3}\frac{1}{h(k')\Omega} k'^2.	
\end{equation}
 This means the $\frac{1}{3}\frac{1}{h(k')\Omega} k'^2$ term can be ignored in Eq. (\ref{003.A}) and so finally,
\begin{align}
&S(K_{all}|K_1)\approx \text{log}\left(\left(\frac{\frac{1}{\Omega}k'+1}{h(k')}\right)\Omega \right)   \approx \text{log}(\frac{\Omega}{h(k')}) \nonumber \\
&= \text{log}(\Omega)-\text{log}(h(k')) \approx \text{log}(\Omega).
\end{align}

This by using the time relative entropy, we have recovered the standard entropy for the microcanonical ensemble. We will use Eq. (\ref{003.RE2}) to define the total entropy for the microcanonical ensemble.  However, it may seem that the above result holds only for special choices for $f$. We show below that for any choice of $f$ which decays fast enough, the above results still holds. 

\begin{theorem}
If $\Omega$ is very large and $f$ satisfies the following conditions
\begin{enumerate}
\item $f(0)=1$ (Automatically satisfied from the definition of $f$),
\item $lim_{n\rightarrow \infty}f(n) \rightarrow \frac{1}{\Omega}$ (Automatically satisfied from the definition of $f$),
\item $0>f'>-\infty$ ,
\item $\frac{1}{|f'|}<<\Omega$ while $\frac{1}{f}<<\Omega$,
\item $f''>0$ or f' increasing,
\end{enumerate}
then 
\begin{equation}
	S(K_{all}|K_1)\approx \text{log}(\Omega)
\end{equation}
\end{theorem}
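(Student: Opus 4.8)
The plan is to estimate the numerator and denominator of the ratio in Eq.~(\ref{003.RE2}) separately, and to show that the numerator is universally $\approx\Omega$ while the denominator is at most polynomial in the decay scale of $f$, so that the logarithm of their ratio is dominated by $\log\Omega$. First I would introduce the decay scale $k'\sim 1/|f'|$, i.e.\ the number of steps over which $f$ falls from $f(0)=1$ to its floor $1/\Omega$; conditions (3)--(4) guarantee that $k'$ is finite and $k'\ll\Omega$, and the convexity condition (5) lets me sandwich $f(k)$ between its tangent at $k=0$ and the chord joining $(0,1)$ to $(k',1/\Omega)$. This gives two-sided control, $f(k)\gtrsim 1-k/k'$ on $[0,k']$ and $f(k)\approx 1/\Omega$ for $k\ge k'$, which is exactly the behaviour of the test function treated above.

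For the numerator I would write the product as
\begin{equation}
\prod_{k=0}^{n-1}\!\left(1-\tfrac{1-f(k)}{\Omega-1}\right)=\exp\!\left(-\tfrac{1}{\Omega-1}\sum_{k=0}^{n-1}\big(1-f(k)\big)+O(n/\Omega^{2})\right).
\end{equation}
Since $\sum_{k\ge 0}\big(f(k)-1/\Omega\big)=O(k')\ll\Omega$ by (3)--(5), for $n\gtrsim k'$ the exponent equals $-n/\Omega$ up to an $O(1)$ prefactor $\exp(O(k'/\Omega))\approx 1$, so the product behaves like $(1-1/\Omega)^{n}$. As $1-f(n)\to 1-1/\Omega$ for large $n$, the numerator is therefore approximately $\tfrac{1}{\Omega-1}\sum_{n}n(1-1/\Omega)^{n}\approx\tfrac{1}{\Omega}\cdot\Omega^{2}=\Omega$, while the small-$n$ part is bounded by $\tfrac{1}{\Omega}\sum_{n\le k'}n\,(1-f(n))=O(k'^{2}/\Omega)\ll\Omega$. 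The key point is that this estimate is insensitive to the detailed profile of $f$: only the common floor $1/\Omega$ controls the large-$n$ tail that dominates the geometric sum.

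For the denominator, the product $\prod_{k=0}^{n-1}f(k)$ picks up a factor $\approx 1/\Omega$ for every step with $k\ge k'$, so the summand is super-exponentially suppressed once $n>k'$ and the sum reduces to $\sum_{n=1}^{k'}n\big[\prod_{k=0}^{n-1}f(k)\big](1-f(n))$. Using the convexity sandwich this reproduces the quantity $h(k')$ of Eq.~(\ref{003.21}), and the corollary $h(k')\in(1/k',k')$ bounds the denominator between $1/k'$ and $k'$. Combining the two estimates gives $S(K_{all}|K_1)=\log(\mathrm{num}/\mathrm{den})=\log\Omega-\log h(k')$; since $h(k')$ is at most polynomial in $k'$ and $k'\ll\Omega$, the term $\log h(k')$ is subleading and $S(K_{all}|K_1)\approx\log\Omega$, exactly as in the test-function calculation.

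The step I expect to be the main obstacle is the denominator estimate, specifically the two-sided bound on $\sum_{n=1}^{k'}n\big[\prod_{k=0}^{n-1}f(k)\big](1-f(n))$ for a general $f$. For the linear test function this collapses to the inequality $\tfrac{n^{2}}{k'}\big(\tfrac{1}{k'}\big)^{n}\tfrac{k'!}{(k'-n)!}\le 1$ of the conjecture above, which is only verified numerically; extending the bound to an arbitrary convex $f$ --- that is, controlling the error in replacing $\prod_{k=0}^{n-1}f(k)$ by its convex/linearized analog and then carrying out the summation --- is where the genuine work lies. By contrast the numerator estimate is robust, since it depends only on the shared floor $1/\Omega$ and not on the shape of $f$.
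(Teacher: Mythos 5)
Your proposal follows essentially the same route as the paper's own proof: introduce the decay scale $k'$ at which $f$ reaches its floor $1/\Omega$, show that the numerator of Eq.~(\ref{003.RE2}) is governed by the geometric tail with ratio $(1-1/\Omega)$ and is therefore $\approx\Omega$ (in the paper's notation, $A+B(\tfrac{k'}{\Omega}+1)\Omega^{2}\approx\Omega^{2}$ with $A\lesssim k'^{2}$ and $B\approx 1$; the factor of $\Omega$ difference is only because you keep the $1/(\Omega-1)$ prefactor upstairs while the paper moves it into the denominator as $\Omega C$), bound the denominator by a polynomial in $k'$, and conclude $S\approx\log\Omega-\log(\mathrm{poly}(k'))\approx\log\Omega$. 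Two points need fixing, both located in the denominator step that you flag as the main obstacle.

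First, your explicit convexity inequality is written backwards. A convex decreasing $f$ lies \emph{below} the chord joining $(0,1)$ to $(k',1/\Omega)$ and \emph{above} its tangent at $0$; the bound you actually need to control the product $\prod_{k=0}^{n-1}f(k)$ from above is $f(k)\lesssim 1-k/k'$, not $f(k)\gtrsim 1-k/k'$. The tangent lower bound lives at the scale $1/|f'(0)|$, which for a convex $f$ is in general smaller than the chord scale $k'$, so the two scales should not be conflated into a single $k'$. Second, the ``genuine work'' you anticipate --- extending the conjecture's inequality $\frac{n^{2}}{k'}(\frac{1}{k'})^{n}\frac{k'!}{(k'-n)!}\le 1$ to general convex $f$ --- is not needed, and the paper's proof never invokes the conjecture. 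With the chord bound one has $\prod_{k=0}^{n-1}f(k)\le\prod_{k=0}^{n-1}(1-k/k')=\frac{k'!}{(k'-n)!\,k'^{n}}\le 1$, so each summand of $C=\sum_{n=1}^{k'}n\prod_{k=0}^{n-1}f(k)\,(1-f(n))$ is at most $n\le k'$, giving $C\le k'^{2}$; the matching lower bound $C\gtrsim 1/k'$ follows from $f\ge 1/\Omega$ in the product together with $1-f(n)\ge n/k'$ (the chord bound again). Since $C$ enters only through $\log C$ and $k'\ll\Omega$, these crude two-sided bounds already yield $S\approx\log\Omega$. The sharp statement $h(k')\le k'$, and hence the conjecture, is only relevant to the refined test-function estimate of Eq.~(\ref{003.21}), not to this theorem, so your assessment of where the difficulty lies is misplaced even though your overall argument goes through.
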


\begin{proof}
 Write $f$ as

$f(k)=1-g(k)$ for $1\leq k<k'$  

$f(k)\approx \frac{1}{\Omega}$ for $k>k'$

where $k'$ is defined so that 
\begin{equation}
	1-g(k')\approx \frac{1}{\Omega} \implies g(k')\approx 1.
\end{equation}
Since $\frac{1}{|f'|}<<\Omega$ we have that  $k'<<\Omega$. And with condition 5, we have that 
\begin{equation}
1-g(k) \leq 1-\frac{k}{k'}
\end{equation}
for $k<k'$. The total entropy can be approximately written as 
\begin{equation}
\text{log}\left(\frac{A+B(\frac{1}{\Omega}k'+1)\Omega^2}{\Omega C+D(1+k'-\frac{1}{\Omega}k')}\right),
\end{equation}
where 
\begin{subequations}

\begin{equation}
A=\sum_{n=1}^{k'}n\Pi_{k=0}^{n-1}\left[1-\frac{1}{\Omega}(g(k))\right](g(n)),
\end{equation}
\begin{equation}
B=\Pi_{k=0}^{k'-1}\left[1-\frac{1}{\Omega}(g(k))\right],
\end{equation}
\begin{equation}
C=\sum_{n=1}^{k'}n\Pi_{k=0}^{n-1}\left[f(k)\right]g(n),
\end{equation}
\begin{equation}
D=\Pi_{k=0}^{k'-1}\left[f(k)\right].
\end{equation}
\end{subequations}
Assume for now that $\Omega C>>D(1+k'-\frac{1}{\Omega}k')$. Then the total  entropy can be approximated by 

\begin{equation}
	\text{log}\left(\frac{A+B(\frac{1}{\Omega}k'+1)\Omega^2}{\Omega C}\right).
\end{equation}

We have that 
\begin{align}
C&\leq\sum_{n=1}^{k'}n\Pi_{k=0}^{n-1}\left(1-\frac{k}{k'}\right)=\sum_{n=1}^{k'}n\left(\frac{1}{k'}\right)^n\frac{k'!}{(k'-n)!} \nonumber \\
	&\leq \sum k' \leq k'^2
\end{align}
and 
\begin{align}
	C\geq\sum_{n=1}^{k'}n\Pi_{k=0}^{n-1}\left(\frac{1}{\Omega}\right)\left(\frac{n}{k'}\right)=\sum\frac{n^2}{k'}\left(\frac{1}{\Omega}\right)^{n-1}= O \frac{1}{k'},
\end{align}
where $O\approx 1$. Similarly, find that have that $A\leq \frac{1}{2}(k'^2+k')$ and $B\approx 1$. Hence
\begin{equation}
S=\text{log}\left(\frac{A}{\Omega C}+B\frac{\frac{1}{\Omega}k' +1}{C}\Omega \right) \approx \text{log}(\Omega).
\end{equation}
Finally, to show that $\Omega C>>D(1+k'-\frac{1}{\Omega}k')$, we have that 
\begin{equation}
	D(1+k'-\frac{1}{\Omega}k')\leq k'!\left(\frac{1}{k'}\right)^{k'}(1+k')\leq 2 << C\Omega.
\end{equation}
\end{proof}
\begin{remark}
This proof should not be considered as completely rigorous. In particular, $k'$ may not be so sharply defined. We leave a rigorous proof of this for future work. 
\end{remark}

One may criticize that since $f$ depends on the choice of  $\delta t$, by making $\delta t$ small enough one can also make $f$ decay arbitrarily slowly and break condition 4. However, if $\delta t$ is chosen too small, then the semi-classical approximation breaks down as only the flow at the boundary of the cells (Fig. (\ref{003.fig:H})) contributes to $D$.

For the semi-classical approximation to remain valid, one must allow the flow to `well mix' within a cell, which we will take to be the minimum time required for the flow to transverse from one side of the cell to the other side. $\delta t_{LB} = \frac{\Delta x}{v}$ gives a lower bound for $\delta t$, where $\Delta x$ is the spacial dimensions/size of the cell and $v$ is the average speed of the particles.

By the uncertainty relation, we have $\Delta x \Delta p\geq \hbar/2$. Using the relation $P^2=2mE$ for non relativistic system, we find that
\begin{equation}
\delta t_{LB}=\frac{\hbar/2}{\Delta E}>>\frac{\hbar}{E}
\end{equation}
which is simply the time energy uncertainly relation. For a choice of $\delta t \geq \delta t_{LB}$, we expect that $f$ decays fast enough to satisfy the conditions of the theorem. It would be interesting to further investigate this with computer simulations and other analysis. 

\section{Canonical and Grand Canonical ensemble}
\label{003.CE}

The usual expression for entropy can also be recovered for the canonical and grand canonical ensemble via the microcanonical ensemble by putting the system in a large heat bath. However, one needs to be careful since single celled subsystems can have different time relative entropies with respect to each other, and there is no natural choice for picking out a such a preferred subsystem.

\begin{figure}\centering
  \resizebox{0.40\textwidth}{!}{\includegraphics{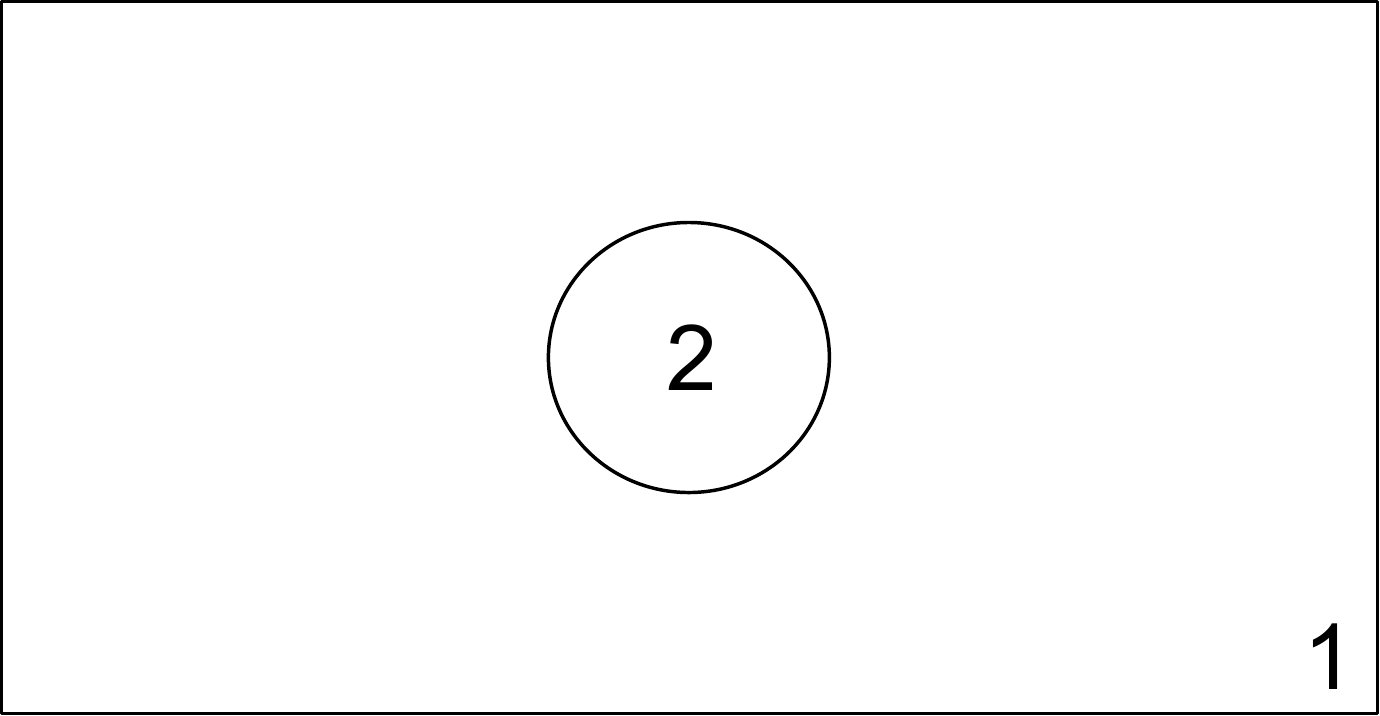}}
  \caption{Canonical ensemble (subsystem 2)  in a heat bath}
  \label{003.fig:CE}
\end{figure}

For subsystem 2 in a heat bath as shown in Fig. (\ref{003.fig:CE}), the total number of cells in this subsystem at fixed energy $E_0$ is approximately given by the following expression
\begin{equation}\label{003.O}
\sum_r\Omega_1(E^0-E_r)\Omega_2(E_r),
\end{equation}
where $\Omega_i(E)$ is the number of cells in subsystem $i$ at fixed energy $E$ and $r$ runs over all the energies subsystem $2$ could have. If the heat bath is large, then $\frac{E_r}{E_0}<<1$ and so we can approximate Eq. (\ref{003.O}) as
\begin{align}
&\sum_r\Omega_1(E^0-E_r)\Omega_2(E_r) \nonumber \\
&\approx \sum_r\Omega_2(E_r)\left(\Omega_1(E^0)+\frac{d\Omega_1}{dE}(E^0-E_r)\right) 
\end{align}
\begin{equation}\label{003.O2}
\approx \sum_r\Omega_2(E_r)\left(Ae^{-\beta E_r}\right)=\sum_s\left(Ae^{-\beta E_s}\right) = AZ,
\end{equation}
where $A$ is some constant, $s$ runs over all the cells in subsystem $2$ and $Z$ is the partition function. 

We can now find the time relative entropy of system in the canonical ensemble with respect to a single celled subsystem by noticing that this subsystem is really contained in a microcanonical ensemble with the single celled subsystem really being a multiple celled subsystem of size $Ae^{- \beta E_s}$. Using the correspondence between our time relative entropy and microcanonical ensemble (with a slight generalization), we find that 
\begin{equation}
S(K_{all}|K_s)=\text{log}\left(\frac{ZA}{Ae^{- \beta E_s}}\right)=\text{log}(Z)+\beta E_s.
\end{equation} 
After averaging over all cells in subsystem 2, we obtain 
\begin{equation}
<S(K_{all}|K_s)>=\text{log}(Z)+\beta<E>,
\end{equation} 
where $<\ldots>$ denotes the average over all cells. This corresponds to the usual expression for entropy in the canonical ensemble. Similar results hold for the grand canonical ensemble. To generalize the result from the microcanonical case, we define the total entropy of the system using the time relative entropy as
\begin{equation}
S=<S(K_{all}|K_s)>.
\end{equation}
Additionally we find that the time relative entropy with respect to subsystems with different volumes is approximately,
\begin{equation}
S(K_{V}|K_{V'})=\text{log}(Z(V))-\text{log}(Z(V')),
\end{equation} 
where $Z(V)$ is the partition function of a subsystem with volume $V$.
\section{Application to Black holes}

We do a heuristic consistency check to see if one can use the time relative entropy to find the entropy of a black hole. To do this, we compare the entropy of two black holes of similar size.

\begin{figure}\centering
  \resizebox{0.40\textwidth}{!}{\includegraphics{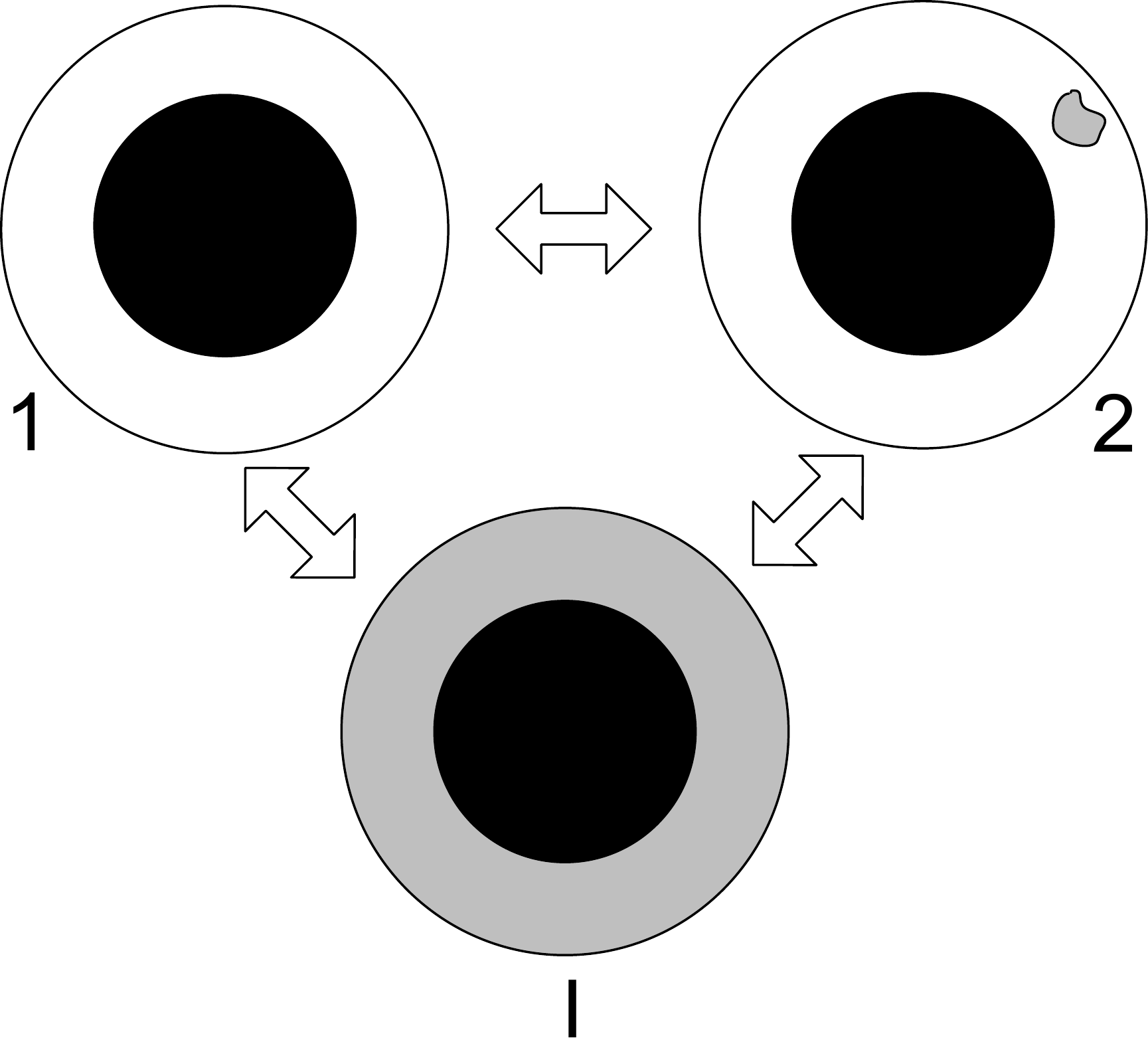}}
  \caption{Black hole ``subsystems" with photon radiation in grey}
   \label{003.fig:BH}
\end{figure}

In Fig. (\ref{003.fig:BH}), the system is a black hole enclosed in a reflective cavity just larger than the black hole, so that any radiation that escapes from the black hole does not escape to infinity. $K_1$ is the subsystem of the non-evaporated black hole and $K_I$ is the subsystem of the evaporated black hole with the radiation at temperature equal to the Hawking radiation.  $K_2$ is the subsystem of the evaporated black but with the radiation in a fixed configuration, so that it is a single cell subsystem ($K_2$ is a subset of $K_I$). We define a small change in the black hole energy as
\begin{equation}\label{003.BH2}
\delta S_{BH}=S(K_1|K_2).
\end{equation}
\begin{remark}
When all the radiation is within a Planck length away from the black hole, we will treat the radiation as being part of the non-evaporated black hole subsystem. In other words, $K_1$ can be view as a subset of $K_I$ and we take our system to be a grand canonical ensemble contained between the reflective cavity and the black hole.   
\end{remark}

\begin{remark}
To be precise, since we need all the subsets to be disjoint, we need to remove $K_1$ and $K_2$ from $K_I$.
\end{remark}

To estimate the ratios of the expected times, note that since $K_I$ is much larger $K_1$ and $K_2$ the system will almost certainly evolve from $K_1$ to $K_2$ via $K_I$, so we can estimate Eq. (\ref{003.BH2}) as

$$
S(K_1|K_2)\approx \text{log}\left(\frac{\tau(K_1 \rightarrow K_I)+\tau(K_I \rightarrow K_2)}{\tau(K_2 \rightarrow K_I)+\tau(K_I \rightarrow K_1)}\right)\approx
$$

$$\text{log}\left(\frac{\tau(K_I \rightarrow K_2)}{\tau(K_I \rightarrow K_1)}\right)\approx \text{log}\left(\frac{\tau(K_I \rightarrow K_2)\tau(K_1 \rightarrow K_I)}{\tau(K_I \rightarrow K_1)\tau(K_2 \rightarrow K_I)}\right)=$$

\begin{equation}
	-\text{log}\left(\frac{\tau(K_I\rightarrow K_1)}{\tau(K_1\rightarrow K_I)}\right)+\text{log}\left(\frac{\tau(K_I\rightarrow K_2)}{\tau(K_2\rightarrow K_I)}\right),
	\end{equation}
where we made the following approximations\\

$\tau(K_I\rightarrow K_i) >> \tau(K_i\rightarrow K_I)$ for $i=1,2$

$\frac{\tau(K_1\rightarrow K_I)}{\tau(K_2\rightarrow K_I)}$ is order of unity.\\

Based on correspondence between the time relative entropy and grand canonical ensemble, we find that 

\begin{equation}\label{003.40}
	\text{log}\left(\frac{\tau(K_I\rightarrow K_2)}{\tau(K_2\rightarrow K_I)}\right)=S_{photons}
\end{equation}

\begin{equation}
	\text{log}\left(\frac{\tau(K_I\rightarrow K_1)}{\tau(K_1\rightarrow K_I)}\right)=\text{log}(Z_V)-\text{log}(Z_V')\approx \frac{1}{4}S_{photons}
\end{equation}

Note: $Z_V$ is the grand canonical partition of radiation in the cavity and $Z_V'$ is the grand canonical partition of the radiation with volume within one Planck length away from black hole. We used that $\text{log}(Z_V)>>\text{log}(Z_V')$ and that $\text{log}(Z_V)=\frac{1}{4}S_{photons}$, derived from the known properties of $Z_V$ for photons .

\begin{remark}
 We have implicitly averaged over all cells on the left-hand side of Eq. (\ref{003.40}) (see section \ref{003.CE}) .
\end{remark}
Hence 
\begin{equation}
\delta S_{BH}\approx \frac{3}{4}S_{photons}=\frac{U}{T}=\frac{\delta M}{T},
\end{equation}
where $U$ is the internal energy of the photon radiation and $M$ is the mass of the black hole. The internal energy of the photon radiation is the mass of the black hole lost in evaporation.

 By integrating this expression and using the expression for $T$ \cite{10.1007/BF02345020} (temperature of Hawking radiation), we find  in natural units that
\begin{equation}
S_{BH}=\frac{A}{4},
\end{equation}
where we have set the arbitrary constant to 0.

The above analysis is reminiscent of calculations of black hole entropy using its thermal atmosphere \cite{tHooft:1984kcu}.

\section{Summary}

Motivated by some of the interpretational issues with black hole entropy, we define a new way to calculate the entropy of a system by determining how irreversible a process is relative to another process rather than counting microstates. 

This new entropy, the time relative entropy, gives a reasonable value for various situations and can be used at least to first approximation to recover the usual entropy of systems in the microcanonical, canonical and grand canonical ensemble. We used the time relative entropy to recover the entropy of a black hole, although the relationship between the mass and temperature of a black hole was needed. 

It would be interesting to investigate Eq. (\ref{003.RE2}) for different functions $f$ with computer simulations or to calculate the ratio of the expected times for simple systems. 

As an extension, it would be interesting to see if the time reversible entropy can be generalized from semi-classical to fully quantum systems.

\bibliographystyle{unsrt}
\bibliography{references}

\end{document}